\tikzstyle{block} = [draw,minimum size=2.5em]
\newcommand{\vo}[1]{\boldsymbol{#1}}
\newcommand{\set}[1]{\mathbb{#1}}
\newcommand{\real}{\set{R}}
\newcommand{\x}{\vo{x}}
\newcommand{\xd}{\dot{\vo{x}}}
\newcommand{\z}{\vo{z}}
\renewcommand{\u}{\vo{u}}
\newcommand{\uhat}{\vo{\hat{u}}}
\renewcommand{\d}{\vo{d}}
\newcommand{\dbar}{\vo{\bar{d}}}
\newcommand{\w}{\vo{w}}
\newcommand{\wbar}{\vo{\bar{w}}}
\newcommand{\kappab}{\vo{\kappa}}
\newcommand{\X}{\vo{X}}
\newcommand{\Y}{\vo{Y}}
\newcommand{\A}{\vo{A}}
\newcommand{\B}{\vo{B}}
\newcommand{\C}{\vo{C}}
\newcommand{\D}{\vo{D}}
\newcommand{\I}{\vo{I}}
\newcommand{\Q}{\vo{Q}}
\newcommand{\K}{\vo{K}}
\newcommand{\F}{\vo{F}}
\newcommand{\W}{\vo{W}}
\newcommand{\G}{\vo{G}}
\newcommand{\M}{\vo{M}}
\newcommand{\V}{\vo{V}}
\newcommand{\tr}[1]{\textbf{tr}\left[#1\right]}
\newcommand{\diag}[1]{\textbf{diag}\left(#1\right)}
\newcommand{\norm}[1]{\left\|#1\right\|}
\newcommand{\elab}[1]{\label{eqn:#1}}
\newcommand{\eqn}[1]{(\ref{eqn:#1})}
\newcommand{\flab}[1]{\label{fig:#1}}
\newcommand{\fig}[1]{Fig.\ref{fig:#1}}
\newtheorem{theorem}{Theorem}
\newtheorem{remark}{Remark}
\begin{document}
\title{Quantifying Maximum Actuator Degradation for a Given $\mathcal{H}_2/\mathcal{H}_\infty$ Performance with Full-State Feedback Control}

\author{\IEEEauthorblockN{\bf Hrishav Das}
\IEEEauthorblockA{Aerospace Engineering\\
Indian Institute of Technology Madras\\
Chennai, India\\
ae21b023@smail.iitm.ac.in}
\and
\IEEEauthorblockN{\bf Eliot Nychka}
\IEEEauthorblockA{Aerospace Engineering\\
Texas A\&M University\\
College Station, TX, USA\\
eliot.nychka@tamu.edu}
\and
\IEEEauthorblockN{\bf Raktim Bhattacharya}
\IEEEauthorblockA{Aerospace Engineering\\
Texas A\&M University\\
College Station, TX, USA\\
raktim@tamu.edu}
}

\maketitle

\begin{abstract}
    In this paper, we address the issue of quantifying maximum actuator degradation in linear time invariant dynamical systems. We present a new unified framework for computing the state-feedback controller gain that meets a user-defined closed-loop performance criterion while also maximizing actuator degradation. This degradation is modeled as a first-order filter with additive noise. Our approach involves two novel convex optimization formulations that concurrently determine the controller gain, maximize actuator degradation, and maintain the desired closed-loop performance in both the $\mathcal{H}_2$ and $\mathcal{H}_\infty$ system norms. The results are limited to open-loop stable systems. We demonstrate the application of our results through the design of a full-state feedback controller for a model representing the longitudinal motion of the F-16 aircraft.\end{abstract}
\IEEEpeerreviewmaketitle

\section{Introduction}
This paper explores quantifying maximum actuator degradation in linear dynamical systems, a critical issue in engineering control systems. Actuator degradation results from wear, environmental influences, or aging components, leading to diminished performance and, in severe cases, system failure. In aerospace engineering, for example, the health of control systems is vital for the safe and efficient operation of aircraft; actuator degradation could significantly compromise flight control, endangering safety. Similar issues also exist in other engineering fields and become essential in safety-critical applications. Hence, developing methodologies for assessing and quantifying maximum actuator degradation is critical for quantifying robustness.

\subsection{Literature Survey}
In the current state of the art, there are generally two approaches to addressing faults or degradation in systems and their controllers: enhancing the robustness and fault tolerance of controllers by integrating degradation estimates, or developing improved models for actuator degradation.

Papers such as \cite{khelassi2011fault, zhang2022solving, rezaeizadeh2023reliability,shen2016robust,4304100} look at methods related to enhancing controller robustness and fault tolerance by incorporating degradation estimates. Specifically, \cite{khelassi2011fault} presents a fault-tolerant control design that utilizes actuator health information to improve system reliability and reduce reliance on more sensitive actuators. The approach involves solving a fault-tolerant tracking problem using a Linear Matrix Inequality (LMI) framework while taking into account actuator health. Zhang et.al. \cite{zhang2022solving} introduces an algorithm that models actuator degradation using the Weiner Diffusion method. This degradation is then incorporated as a separate state variable in the so called Degradation Informed State System (DISS) within a controller. The objective is to optimize tracking performance while minimizing degradation. The controller tracks a reference trajectory of physical states such as position, orientation, and degradation quantity, employing a Model Predictive Control (MPC) approach, although other optimization-based methods could also be used. Notably, the paper also addresses the concept of `maximum degradation', which is stochastically computed from historical data on actuators and their health to impose constraints on the MPC. In \cite{rezaeizadeh2023reliability}, the primary objective is to introduce a controller for electric vehicles that enhances speed tracking, reliability, and lifespan. The approach is applied to a power converter that operates a permanent magnet synchronous motor (PMSM). The paper presents a method that employs an extended $\mathcal{H}_{\infty}$ design framework alongside reliability models to mitigate thermal stress on the power electronic Insulated Gate Bipolar Transistors (IGBTs). The results demonstrate that the proposed control strategies are more effective in real-world scenarios compared to existing state-of-the-art methods. Shen et. al. \cite{shen2016robust} presents a high-level Model Reference Adaptive Control (MRAC) approach to generate desired torques for tracking specific trajectories and orientations while addressing disturbances. These torques serve as input to a robust control allocation scheme that apportions control effort, taking into account practical constraints on each control input and inaccuracies in fault estimation. This study addresses uncertainties in actuator effectiveness and fault detection and diagnostics, without focusing on estimating the fault or calculating the maximum allowable fault. In \cite{4304100}, a method to design a robust and optimal controller for open-loop unstable systems with actuator redundancy is proposed. Primarily the system outlines the LQR method as a `robust Linear Quadratic (LQ) regulator' and shows it to be equivalent to a traditional LQ regulator. 

A considerable amount of research focuses on developing models for actuator degradation. For instance, \cite{wang2015fault, zhang2003fault} models actuator degradation by adjusting its influence (modifying the B matrix) and discusses a method for estimating this degradation. In  \cite{zhang2008incorporating} various degradation models are explored that can be selected based on the actuator's degradation state. It also presents a technique for adjusting the desired state input to the controller to prevent saturation or exceeding limits in degraded actuators.

Nguyen et al. \cite{nguyen2014feedback} and Si et al. \cite{8902212} incorporate the concept of Remaining Useful Life (RUL) in their degradation modeling formulations. The work of Nguyem et.al. employs a statistical approach, considering discrete time impacts where each impact contributes to degradation within a probabilistic range. The RUL of actuators is determined by simulating the state until a specified precision criterion is no longer met (indicating failure). The work of Si et. al. integrates the degradation process model of the actuator with the system's state transition model and utilizes a particle filter algorithm to estimate both the hidden degradation state and the system state.

There has been limited work on quantifying worst-case or maximum degradation of actuators. Few papers \cite{nguyen2014feedback, 8902212, zhang2022solving} come closest to addressing the broader issue of computing maximum actuator degradation for an acceptable closed-loop performance. There has been no prior work, to the best of our knowledge, that computes the maximum actuator degradation for a given closed-loop performance.

\subsection{Key Contributions}
We introduce a novel unified framework to compute the state-feedback controller gain that satisfies a user-specified closed-loop performance criterion while maximizing actuator degradation. The degradation is represented as a first-order filter with additive noise. We propose two new convex optimization formulations that simultaneously determine the controller gain, maximize actuator degradation, and ensure the desired closed-loop performance in both the $\mathcal{H}_2$ and $\mathcal{H}_\infty$ system norms. The key contributions are detailed in theorems 1 and 2. This is the first time a joint formulation is presented in the $\mathcal{H}_2/\mathcal{H}_\infty$ framework. However, these results are limited to open-loop stable systems.

\section{Preliminaries}
The aircraft can generally experience actuator faults that can be classified as either \textit{total} or \textit{partial} faults. In actuators, a total fault is a complete loss of control action, and examples include a stuck or floating actuator, which produces no controllable actuation. They can occur due to breakage, cut or burnt wiring, short-circuit presence of a foreign body in the actuator, etc. Partially failed actuators produce only a part of the normal actuation and correspond to reduced rate and magnitude limits. These faults can result from hydraulic or pneumatic leakage, increased resistance, a fall in the supply voltage, etc.

In this paper, complete and partial faults are modeled as shown in \fig{faultModel}, where a loss in the actuator rate is modeled using a low-pass filter, $F(s):= \frac{\gamma_u}{s/\omega_c+1}$, with cutoff frequency $\omega_c\ge 0$. The biases and inaccuracies are modeled using an additive noise $w_a(t)$ and is assumed to be a norm-bounded signal $\|w_a(t)\|_2 \le \gamma_a$. The control magnitude loss is modeled as a constraint on $\gamma_u$ the dc gain of the actuator. Therefore, with parameters $\omega_c,\gamma_a$ and $\gamma_u$, we can capture a rich class of actuator faults commonly encountered in control systems. 

\begin{figure}[h!]
\centering
\begin{tikzpicture}[>=latex',thick]
    \node [block] (filter) {\begin{tabular}{c}Fault  \\ Filter $F(s)$ \end{tabular}};
    \node (ya) [draw=black,right=1 cm of filter, node distance=1cm,circle, inner sep=0.01ex] {$+$};
 
	\node[left=0.5cm of filter]  (uc){\begin{tabular}{c}Control \\Signal  \\ $u(t)$ \end{tabular} };
	\node[right=0.5cm of ya]  (ur){\begin{tabular}{c}Faulty \\ Control \\ Signal \\ $\hat{u}(t)$ \end{tabular}};
	\node (wa) [above=3ex of ya, node distance=1cm] {\begin{tabular}{c}Actuator \\Noise  \\ $w_a(t)$ \end{tabular}};
    \draw[->] (uc) -- (filter);
	\draw[->] (filter) -- (ya) node[midway,below]{$x_F(t)$};
	\draw[->] (ya) -- (ur);
	\draw[->] (wa) -- (ya);
\end{tikzpicture}
\caption{Modeling of a faulty actuator.}
\flab{faultModel}
\end{figure}
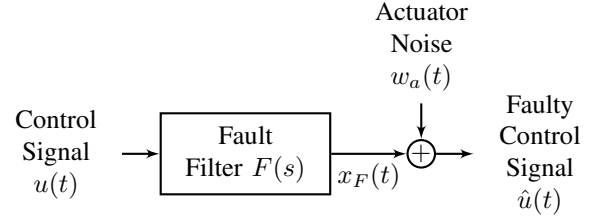

For  systems with multiple inputs, the dynamics of the filter can be written as
\begin{subequations}
\begin{align}
\xd_F(t) &= -\diag{\vo{\omega}_c}\x_F(t) + \diag{\vo{\omega}_c}\u(t),\\
\hat{\u}(t) &= \x_F(t) + \w_a(t), \elab{act_fault}
\end{align}
\elab{filter_dynamics}
\end{subequations}
where $n_u$ dimensional vectors $\x_F$, $\u$, $\w_a$, and $\hat{\u}$ are the filter states, control signals, actuator noises, and faulty control signals, respectively. 

We consider the open-loop LTI system,
\begin{subequations}
\begin{align}
    \xd(t) &=\A\x(t) + \B_u\u(t) + \B_d\d(t),\elab{olp_dyn}\\
    \z(t) &=\C_z\x(t), \elab{zdef}
\end{align}
\elab{olp_sys}
\end{subequations}
where $\x\in\real^{n_x}$ is the state vector, $\u\in\real^{n_u}$ is the control vector, $\d\in\real^{n_d}$ is the  exogeneous disturbance signal, and $\z\in\real^{n_z}$ is the \textit{controlled} variable. 

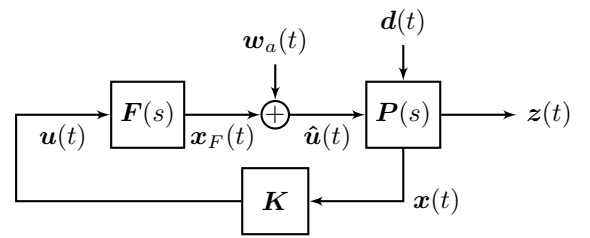
\begin{figure}[h!]
\centering
\begin{tikzpicture}[>=latex',thick,scale=1,transform shape]
    \node [block] (filter) {$\F(s)$};
    \node (ya) [draw=black,right=1 cm of filter, node distance=1cm,circle, inner sep=0.01ex] {$+$};
 
	\coordinate[left=1.25cm of filter] (uc){};
	
	\node[right=1cm of ya,block] (P) {$\vo{P}(s)$};
	\node[below=.5cm of ya,block] (K) {$\K$};

	\node[right=1cm of P] (z) {$\z(t)$};

	\node (wa) [above=3ex of ya, node distance=1cm] {$\w_a(t)$};
	
    \node (d) [above=3ex of P, node distance=1cm] {$\d(t)$};

    \draw[->] (K) -| (uc) -- (filter) node[midway,below]{$\u(t)$};     
    \draw[->] (filter) -- (ya) node[midway, below]{$\x_F(t)$};
    \draw[->] (wa) -- (ya);
    \draw[->] (ya) -- (P) node[midway,below]{$\uhat(t)$};
    \draw[->] (d) -- (P);
    \draw[->] (P) |- (K) node[midway,right]{$\x(t)$};
    \draw[->] (P) -- (z);
\end{tikzpicture}
\caption{Closed-loop system with full-state feedback controller and faulty actuator.}
\flab{faultyCLP}
\end{figure}

This paper considers a full-state feedback formulation, i.e., $\u := \K\x$. Therefore, the closed-loop system with the faulty actuators is shown in \fig{faultyCLP}, and its dynamics is given by
\begin{subequations}
\begin{multline}
\begin{bmatrix} \xd\\ \xd_F \end{bmatrix} = 
\begin{bmatrix}
\A & \B_u \\ \underbrace{\diag{\vo{\omega}_c}\K}_{:=\V} & -\diag{\vo{\omega}_c}
\end{bmatrix}\begin{bmatrix} \x\\ \x_F \end{bmatrix} \\ + \begin{bmatrix}\B_d & \B_u \\ \vo{0} & \vo{0}\end{bmatrix}\begin{bmatrix}\d \\ \w_a \end{bmatrix},
\end{multline}
\begin{align}
\z &= \begin{bmatrix} \C_z & \vo{0} \end{bmatrix}\begin{bmatrix} \x\\ \x_F \end{bmatrix}.
\end{align}
\elab{system_dynamics}
\end{subequations}

The problem is determining $\K$ that bounds the $\mathcal{H}_2/\mathcal{H}_\infty$ norm of the closed-loop transfer function from $\begin{bmatrix}\d\\ \w_a\end{bmatrix} \to\z$ while maximizing actuator faults. In this paper, we propose that the actuator fault is maximized when $\|\w_a(t)\|_2$ is maximized, and the actuation rate and the actuator's DC gain are minimized. The latter two are achieved by minimizing $\|\vo{\omega}_c\|_2$ and $\tr{\V^T\V}$.

Thus, the problem is reduced to designing a control law that meets the specified closed-loop performance with the maximum actuator noise and minimum control rate and magnitude. 

\begin{remark}
    In \eqn{zdef}, we assume that $\z(t)$ is not dependent on $\u(t)$, which is necessary to solve problem in a convex optimization framework. Our future work will explore formulations to remove this restriction.
\end{remark}

\begin{remark}
    The results presented in the paper apply to open-loop stable systems, due to the assumption of a particular structure for the Lyapunov function, which results in a convex formulation.
\end{remark}

\section{Technical Results}
We will use the following standard notations: 
\begin{itemize}
\item $\real^n$ is the space of $n$-dimensional vectors,
\item $\real_+^n$ is the space of $n$-dimensional vectors with real non-negative entries,
\item $\real^{m\times n}$ is the space of $m\times n$ dimensional vectors with real entries,
\item $\mathbb{S}^n_+$ is the space of $n\times n$ symmetric positive semi-definite matrices with real entries,
\item vectors and matrices are represented in bold font, and
\item scalars are represented in normal font.
\end{itemize}
\begin{theorem}
    For the system in \eqn{system_dynamics}, the state-feedback gain that guarantees $\norm{\begin{bmatrix}\d\\ \w_a\end{bmatrix} \to\z}_\infty \leq \gamma$ for the system in \eqn{system_dynamics}, with maximum actuator noise, and minimum control magnitude and rate, is given by the following convex optimization problem,
    \begin{align*}
& \min_{\kappab_a, \vo{\omega}_c, \gamma_{\x_F}, \V,\Y,\Q} \lambda_a \norm{\kappab_a}_2 + \lambda_{\vo{\omega}_c}\norm{\vo{\omega}_c}_2 + \lambda_{\x_F}\gamma_{\x_F},\\
& \text{ subject to} \notag\\
&   \begin{bmatrix}
    \vo{P} + \vo{P}^T &  \begin{bmatrix} \Y\B_d & \Y\B_u\\ \vo{0} & \vo{0}\end{bmatrix} & \begin{bmatrix}\C_z^T \\ \vo{0}\end{bmatrix}\\[2mm]
        \ast & \gamma\begin{bmatrix}\W_d^{-2} & \vo{0}\\ \vo{0} & \diag{\kappab_a}\end{bmatrix} & \begin{bmatrix}\vo{0}\\ \vo{0}\end{bmatrix} \\
        \ast & \ast & \gamma\I
\end{bmatrix} \leq 0,\\
&  \begin{bmatrix}\Q & \V^T \\
    \V & \I 
\end{bmatrix} \geq 0, \tr{\Q} \leq \gamma_{\x_F},  \\
& \kappab \in \real_+^{n_u}, \vo{\omega}_c \in\real_+^{n_u}, \gamma_{\x_F} \in\real_+, \\
&\V \in\real^{n_u\times n_x},\Y \in \mathbb{S}_+^{n_x}, \Q\in \mathbb{S}_+^{n_z}, 
\end{align*}
with 
$$
\vo{P} := \begin{bmatrix} \Y\A &  \Y\B_u\\
    \V & -\diag{\vo{\omega}_c} \end{bmatrix}, 
$$
where $\lambda_a$, $\lambda_{\vo{\omega}_c}$, and $\lambda_{\x_F}$ are user-defined weights, and $\W_d$ is a given scaling matrix for $\d(t)$.  The controller gain can be recovered as $\K := \diag{\vo{\omega}_c}^{-1}\V$.
\end{theorem}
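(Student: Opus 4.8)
\emph{Proof strategy.} The plan is to read \eqn{system_dynamics} as the LTI realization $\dot{\vo{\xi}}=\mathcal{A}\vo{\xi}+\mathcal{B}\begin{bmatrix}\d\\ \w_a\end{bmatrix}$, $\z=\mathcal{C}\vo{\xi}$, with $\vo{\xi}:=\begin{bmatrix}\x\\ \x_F\end{bmatrix}$, $\mathcal{A}=\begin{bmatrix}\A & \B_u\\ \V & -\diag{\vo{\omega}_c}\end{bmatrix}$, $\mathcal{B}=\begin{bmatrix}\B_d & \B_u\\ \vo{0} & \vo{0}\end{bmatrix}$, $\mathcal{C}=\begin{bmatrix}\C_z & \vo{0}\end{bmatrix}$ and $\V:=\diag{\vo{\omega}_c}\K$, and then to invoke the Bounded Real Lemma: $\mathcal{A}$ is Hurwitz and $\norm{\begin{bmatrix}\d\\ \w_a\end{bmatrix}\to\z}_\infty\le\gamma$ if and only if there is a symmetric $\vo{M}\succeq\vo{0}$ with
\[
\begin{bmatrix}\mathcal{A}^T\vo{M}+\vo{M}\mathcal{A} & \vo{M}\mathcal{B} & \mathcal{C}^T\\ \ast & -\gamma\I & \vo{0}\\ \ast & \ast & -\gamma\I\end{bmatrix}\leq 0 .
\]
The decisive step is to restrict $\vo{M}$ to the block-diagonal form $\vo{M}=\begin{bmatrix}\Y & \vo{0}\\ \vo{0} & \I\end{bmatrix}$, i.e.\ to freeze the filter-block of the certificate to the identity. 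Since the lower block-row of $\vo{M}$ is then constant, $\vo{M}\mathcal{A}$ equals exactly the matrix $\vo{P}$ of the theorem, so $\mathcal{A}^T\vo{M}+\vo{M}\mathcal{A}=\vo{P}+\vo{P}^T$, $\vo{M}\mathcal{B}=\begin{bmatrix}\Y\B_d & \Y\B_u\\ \vo{0} & \vo{0}\end{bmatrix}$, and $\mathcal{C}^T=\begin{bmatrix}\C_z^T\\ \vo{0}\end{bmatrix}$. The point is that every occurrence of the unknown gain has been absorbed into $\V$, which together with $\diag{\vo{\omega}_c}$ and $\Y$ enters these blocks \emph{affinely} --- this is what the first-order-filter structure buys us --- and $\K=\diag{\vo{\omega}_c}^{-1}\V$ is recovered once $\vo{\omega}_c\in\real_+^{n_u}$ is imposed.

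Next I would insert the disturbance weighting. Replacing the middle $-\gamma\I$ by $-\gamma\begin{bmatrix}\W_d^{-2} & \vo{0}\\ \vo{0} & \diag{\kappab_a}\end{bmatrix}$ amounts to enforcing the $\mathcal{H}_\infty$ bound on the plant whose $\d$-channel is pre-scaled by $\W_d$ and whose $\w_a$-channel by $\diag{\kappab_a}^{-1/2}$. A Schur complement on the $\gamma\I$ block, pre/post-multiplication by $\begin{bmatrix}\vo{\xi}^T & \d^T & \w_a^T\end{bmatrix}$, substitution of the dynamics, and an integration from rest turn the LMI into $\norm{\z}_2^2\le\gamma^2\big(\norm{\W_d^{-1}\d}_2^2+\norm{\diag{\kappab_a}^{1/2}\w_a}_2^2\big)$. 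This makes precise the sense in which a \emph{smaller} $\kappab_a$ lets $\w_a$ be \emph{larger} for the same guaranteed output bound, which is why $\norm{\kappab_a}_2$ is minimized rather than maximized; the actuator-rate surrogate $\norm{\vo{\omega}_c}_2$ (a slower cutoff) then enters the objective directly.

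For the control-magnitude surrogate I would observe that $\tr{\V^T\V}=\tr{\K^T\diag{\vo{\omega}_c}^2\K}$ is convex quadratic in $\V$ and epigraph it by a Schur complement, $\begin{bmatrix}\Q & \V^T\\ \V & \I\end{bmatrix}\geq 0 \iff \Q\succeq\V^T\V$, so that $\tr{\Q}\le\gamma_{\x_F}$ upper-bounds $\tr{\V^T\V}$ and minimizing $\gamma_{\x_F}$ minimizes it. Collecting everything, the objective is the sum of a linear term and two Euclidean norms, hence convex, and all constraints are linear matrix inequalities or conic/positivity constraints in $(\kappab_a,\vo{\omega}_c,\gamma_{\x_F},\V,\Y,\Q)$; thus the program is convex, and any feasible point delivers, via $\K=\diag{\vo{\omega}_c}^{-1}\V$, a full-state feedback law with $\norm{\begin{bmatrix}\d\\ \w_a\end{bmatrix}\to\z}_\infty\le\gamma$.

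I expect the main obstacle to be conceptual rather than computational: justifying that freezing the lower-right block of $\vo{M}$ to $\I$ is what keeps the problem convex --- if that block is left free, its products with the unknown $\diag{\vo{\omega}_c}$ and with $\V$ are bilinear and the LMI is lost --- and recording the two prices paid, namely conservatism and the restriction to open-loop stable plants: the $(1,1)$ corner of the first LMI forces $\A^T\Y+\Y\A\leq 0$ with $\Y\succeq 0$, i.e.\ $\A$ must be Hurwitz, which is exactly the content of the second remark above. The residual points --- verifying the weighting bookkeeping and the monotonicity in $\kappab_a$, and the strictness technicality that $\Y$ is only positive semidefinite whereas the textbook Bounded Real Lemma asks for $\vo{M}\succ\vo{0}$ (handled by the standard non-strict version) --- are routine.
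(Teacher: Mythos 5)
Your proposal follows essentially the same route as the paper's proof: the Bounded Real Lemma with the structured certificate $\X=\begin{bmatrix}\Y & \vo{0}\\ \vo{0} & \I\end{bmatrix}$, the change of variables $\V=\diag{\vo{\omega}_c}\K$, absorbing the unknown noise scaling as $\W_a^2=\diag{\kappab_a}^{-1}$ so the weighted block $\diag{\W_d^{-2},\diag{\kappab_a}}$ appears linearly, and the Schur-complement epigraph bound on $\tr{\V^T\V}$, together with the same observation that the block-diagonal certificate is what yields convexity at the price of restricting to open-loop stable plants. You also correctly use the negative-definite form of the $\gamma$-weighted diagonal blocks, which is what the paper's proof derives (the signs in the theorem statement appear to be a typo).
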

    
    \begin{proof}
        We introduce diagonal matrix weights $\W_a$ and $\W_d$ such that 
        \begin{align}
            \w_a(t) := \W_a\wbar_a(t), && \d(t) := \W_d\dbar(t),
        \end{align}
        where $\norm{\wbar(t)}_2 = \norm{\dbar(t)}_2 = 1$, $\W_d$ is known, and $\W_a$ is unknown. Therefore, the closed-loop system in terms of the normalized exogeneous signals is given by
        
        \begin{align}
           &\left[\begin{array}{c}\xd\\ \xd_F \\ \hline \z \end{array}\right] 
           = \notag \\
           &\left[\begin{array}{cc|cc}
            \A & \B_u & \B_d\W_d & \B_u\W_a\\
            \diag{\vo{\omega}_c}\K & -\diag{\vo{\omega}_c} & \vo{0} & \vo{0}\\\hline
            \C_z & \vo{0} & \vo{0} & \vo{0}
           \end{array}\right]\left[\begin{array}{c}
            \x\\ \x_F \\ \hline \dbar \\ \wbar_a  \end{array}\right].\elab{tf}
        \end{align}
    Suppose $\G(s)$ is the transfer function from $\begin{bmatrix}\dbar \\ \wbar_a \end{bmatrix} \to \z$
     as defined by \eqn{tf}, then the Bounded Real Lemma \cite{yaku62, yakubovich1967method,iwasaki2005generalized,rantzer1996kalman} states $\norm{\G(s)}_\infty \leq \gamma$ is equivalent to the inequality
    \begin{align}
    \begin{bmatrix}
        \M_{11}^T\X + \X\M_{11} & \X\M_{12} & \M_{21}^T \\
        \ast & -\gamma\I & \M_{22}^T \\
        \ast & \ast & -\gamma\I 
    \end{bmatrix} \leq 0,
    \elab{mi}
    \end{align}
    for $\X \geq 0$, where  $\M_{ij}$ are defined from the matrix partitions in \eqn{tf}. The matrix inequality in \eqn{mi}, is not convex due to the products of the unknowns $\vo{\omega}_c$, $\K$, $\W_a$, and $\X$.
    
    Introducing $\V := \diag{\vo{\omega}_c}\K$ and assuming a special structure for $\X$, 
    \begin{equation}
    \X := \begin{bmatrix} \Y & \vo{0} \\ \vo{0} & \I \end{bmatrix},\elab{Xdef}
    \end{equation}
   then
    $$
    \X\M_{11} = \begin{bmatrix} \Y\A &  \Y\B_u\\
        \V & -\diag{\vo{\omega}_c} \end{bmatrix}, 
    $$
    which is linear in $\Y$, $\vo{\omega}_c$, and $\V$. The controller gain can be recovered as $\K := \diag{\vo{\omega}_c}^{-1}\V$.
    
    The term $\X\M_{12}$ can be written as
    $$\X\M_{12} = \begin{bmatrix} \Y\B_d & \Y\B_u\\ \vo{0} & \vo{0}\end{bmatrix}\diag{\W_d,\W_a}.
    $$
    Therefore, the quadratic term corresponding to $\X\M_{12}$ in the Schur complement of \eqn{mi} can be expressed as $\begin{bmatrix} \Y\B_d & \Y\B_u\\ \vo{0} & \vo{0}\end{bmatrix}\diag{\W^2_d,\W^2_a} \begin{bmatrix} \Y\B_d & \Y\B_u\\ \vo{0} & \vo{0}\end{bmatrix}^T$. Defining,
    \begin{align*}
        \W^2_a &:= \diag{\kappab_a}^{-1},
    \end{align*}
    the matrix inequality in \eqn{mi} can be written as
  $$
    \begin{bmatrix}
        \vo{P} + \vo{P}^T &  \begin{bmatrix} \Y\B_d & \Y\B_u\\ \vo{0} & \vo{0}\end{bmatrix} & \begin{bmatrix}\C_z^T \\ \vo{0}\end{bmatrix}\\[2mm]
            \ast & -\gamma\begin{bmatrix}\W_d^{-2} & \vo{0}\\ \vo{0} & \diag{\kappab_a}\end{bmatrix} & \begin{bmatrix} \vo{0}\\ \vo{0}\end{bmatrix} \\
            \ast & \ast & -\gamma\I
    \end{bmatrix} \leq 0,
$$    
which is convex in $\Y,\V,\vo{\omega}_c$, and $\kappab_a$, where 
$$
\vo{P} := \begin{bmatrix} \Y\A &  \Y\B_u\\
    \V & -\diag{\vo{\omega}_c} \end{bmatrix}. 
$$

The DC gain matrix of the actuator is minimized, by minimizing a upper bound of $\tr{\V^T\V}$, i.e., minimize $\gamma_{\x_F}$ such that $\V^T\V \leq \Q$ and $\tr{\Q} \leq \gamma_{\x_F}$ and $\Q \geq 0$. The constraint on $\V$ can be expressed as the following LMI using Schur complement,
$$
\begin{bmatrix}
    \Q & \V^T \\ \V & \I
\end{bmatrix} \geq 0.
$$

The actuator noise is maximized by minimizing $\norm{\kappab_a}_2$, the control rate is minimized by minimizing $\norm{\vo{\omega}_c}_2$, and the control magnitude is minimized by minimizing $\gamma_{\x_F}$. Therefore, the cost function for the optimization is
    $$
    \min_{\kappab_a, \vo{\omega}_c, \gamma_{\x_F}, \V,\Y\geq0,\Q\geq0} \lambda_a \norm{\kappab_a}_2 + \lambda_{\vo{\omega}_c}\norm{\vo{\omega}_c}_2 + \lambda_{\x_F}\gamma_{\x_F},
    $$
    where $\lambda_a$, $\lambda_{\vo{\omega}_c}$, and $\lambda_{\x_F}$ are user-defined weights.
    \end{proof}
    
    We next present the result for $\mathcal{H}_2$ optimal closed-loop performance.
\begin{theorem}
    For the system in \eqn{system_dynamics} with $\D_d = 0$, the state-feedback gain that guarantees $\norm{\begin{bmatrix}\d\\ \w_a\end{bmatrix} \to\z}_2 \leq \gamma$ for a given $\gamma \geq 0$, with maximum actuator noise, and minimum control magnitude and rate, is given by the following convex optimization problem,

    \begin{align*}
        &\min_{\kappab_a, \vo{\omega}_c, \gamma_{\x_F}, \V,\Y,\Q_1,\Q_2} \lambda_a \norm{\kappab_a}_2 + \lambda_{\vo{\omega}_c}\norm{\vo{\omega}_c}_p + \lambda_{\x_F}\gamma_{\x_F},\\
        & \text{ subject to} \notag\\
        &\begin{bmatrix}
            \vo{P} + \vo{P}^T & \begin{bmatrix} \Y\B_d & \Y\B_u\\ \vo{0} & \vo{0} \end{bmatrix} \\
            \ast & -\begin{bmatrix}\W_d^{-2} & \vo{0}\\ \vo{0} & \diag{\kappab_a}\end{bmatrix}
        \end{bmatrix} \leq 0\\[2mm]
        &\begin{bmatrix}
            \Q_1 & \C_z & \vo{0}\\
            \ast & \Y & 0 \\
            \ast & \ast & \I
        \end{bmatrix} \geq 0, \;\begin{bmatrix}\Q_2 & \V^T \\
            \V & \I 
        \end{bmatrix} \geq 0,\\
        & \tr{\Q_1} \leq \gamma, \tr{\Q_2} \leq \gamma_{\x_F}, \kappab \in \real_+^{n_u}, \vo{\omega}_c \in\real_+^{n_u}, \gamma_{\x_F} \in\real_+, \\
        &\V \in\real^{n_u\times n_x},\Y \in \mathbb{S}_+^{n_x}, \Q_1\in \mathbb{S}_+^{n_z}, \Q_2\in \mathbb{S}_+^{n_x}, 
        \end{align*}
        with 
        $$
        \vo{P} := \begin{bmatrix} \Y\A &  \Y\B_u\\
        \V & -\diag{\vo{\omega}_c} \end{bmatrix}, 
        $$
        where $\lambda_a\geq0$, $\lambda_{\vo{\omega}_c}\geq0$, and $\lambda_{\x_F}\geq0$ are user-defined weights, and $\W_d$ is a given scaling matrix for $\d(t)$.  The controller gain can be recovered as $\K := \diag{\vo{\omega}_c}^{-1}\V$. 
        \end{theorem}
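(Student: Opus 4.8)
The plan is to follow the proof of Theorem~1 almost line for line, replacing the Bounded Real Lemma with the standard controllability--Gramian LMI characterization of the $\mathcal{H}_2$ norm. First I would normalize the exogenous channels exactly as in that proof, writing $\w_a(t) := \W_a\wbar_a(t)$ and $\d(t) := \W_d\dbar(t)$ with $\W_d$ known and $\W_a$ a diagonal unknown, so that the closed loop has the normalized realization \eqn{tf} with block partitions $\M_{ij}$. The ``$\D_d = 0$'' hypothesis --- automatic here, since $\z = \C_z\x$ has no feedthrough --- makes the $\mathcal{H}_2$ norm of the transfer function $\G(s)$ from $\begin{bmatrix}\dbar\\ \wbar_a\end{bmatrix}\to\z$ well defined, and $\M_{11}$ is Hurwitz whenever the Lyapunov inequality below is strictly feasible with $\X > 0$. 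I would then use the characterization that $\norm{\G(s)}_2 \leq \gamma$ holds if and only if there exist $\X > 0$ and $\Q_1\geq0$ with $\M_{11}\X^{-1} + \X^{-1}\M_{11}^T + \M_{12}\M_{12}^T \leq 0$, $\M_{21}\X^{-1}\M_{21}^T \leq \Q_1$, and $\tr{\Q_1} \leq \gamma$, the bound being on the squared norm under the paper's scaling of $\gamma$; here $\X^{-1}$ plays the role of an upper bound on the controllability Gramian, so that $\norm{\G}_2^2 \leq \tr{\M_{21}\X^{-1}\M_{21}^T} \leq \tr{\Q_1}$.

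Second, I would apply the same convexifying change of variables as in Theorem~1: set $\V := \diag{\vo{\omega}_c}\K$ and restrict $\X := \diag{\Y,\I}$. Pre- and post-multiplying the Gramian inequality by $\X$ turns it into $\X\M_{11} + \M_{11}^T\X + \X\M_{12}\M_{12}^T\X \leq 0$; the first two terms are $\vo{P} + \vo{P}^T$ with $\vo{P} = \X\M_{11}$ exactly as in the statement, linear in $\Y,\V,\vo{\omega}_c$, and --- this is the crucial point --- the unknown noise weight $\W_a$ now appears only inside the single factor $\X\M_{12} = \begin{bmatrix}\Y\B_d & \Y\B_u\\ \vo{0} & \vo{0}\end{bmatrix}\diag{\W_d,\W_a}$. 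A Schur complement that exposes $\X\M_{12}$, a congruence transformation by $\diag{\I,\diag{\W_d,\W_a}^{-1}}$, and finally the reparametrization $\W_a^2 := \diag{\kappab_a}^{-1}$ (so that $\W_a^{-2} = \diag{\kappab_a}$), reproduce exactly the first LMI of the theorem, which is then convex in $\Y,\V,\vo{\omega}_c,\kappab_a$.

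Third, I would discharge the two trace conditions. With $\M_{21} = \begin{bmatrix}\C_z & \vo{0}\end{bmatrix}$ and $\X^{-1} = \diag{\Y^{-1},\I}$, the output bound reads $\C_z\Y^{-1}\C_z^T \leq \Q_1$; a Schur complement, keeping the trivial identity block unsimplified, gives $\begin{bmatrix}\Q_1 & \C_z & \vo{0}\\ \ast & \Y & \vo{0}\\ \ast & \ast & \I\end{bmatrix}\geq0$ together with $\tr{\Q_1}\leq\gamma$. The actuator-magnitude term is copied verbatim from Theorem~1: $\tr{\V^T\V}$ is upper bounded by $\gamma_{\x_F}$ through $\V^T\V \leq \Q_2$ and $\tr{\Q_2}\leq\gamma_{\x_F}$, with $\V^T\V\leq\Q_2$ rewritten as $\begin{bmatrix}\Q_2 & \V^T\\ \V & \I\end{bmatrix}\geq0$. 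Maximizing the actuator noise then means minimizing $\norm{\kappab_a}_2$, minimizing the control rate means minimizing $\norm{\vo{\omega}_c}_p$ (any $p\geq1$), and minimizing the control magnitude means minimizing $\gamma_{\x_F}$, which yields the stated weighted cost; the gain is recovered as $\K := \diag{\vo{\omega}_c}^{-1}\V$, and since every constraint is an LMI or a norm, the program is convex.

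As in Theorem~1, the real obstacle is not any individual computation but the bookkeeping that keeps the Gramian inequality convex: the identity block of $\X = \diag{\Y,\I}$ is precisely what lets the $-\diag{\vo{\omega}_c}$ filter block pass through the Lyapunov term linearly, and multiplying through by $\X$ is what pulls $\W_a$ out of a bilinear term and into a freely reparametrizable diagonal block. The price is the restriction to open-loop stable plants (Remark~2) and some conservatism from the fixed block structure, and I would have to track the strictness conditions used by the Schur complements and by ``$\M_{11}$ Hurwitz'', namely $\Y > 0$, $\diag{\kappab_a} > 0$, and $\W_d$ invertible. I would also make the $\gamma$-versus-$\gamma^2$ convention on $\tr{\Q_1}$ explicit so that the norm bound in the statement is self-consistent. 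In contrast with the $\mathcal{H}_\infty$ result, the $\mathcal{H}_2$ certificate naturally separates into a Gramian Lyapunov inequality plus a trace-bounding LMI rather than a single bordered inequality.
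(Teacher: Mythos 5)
Your proposal is correct and follows essentially the same route as the paper: the controllability-Gramian LMI characterization of the $\mathcal{H}_2$ norm (Lyapunov inequality in $\X$ plus the trace condition $\tr{\M_{21}\X^{-1}\M_{21}^T}\leq\gamma$), the same change of variables $\V=\diag{\vo{\omega}_c}\K$, $\X=\diag{\Y,\I}$, $\W_a^2=\diag{\kappab_a}^{-1}$, Schur complements for the two trace bounds, and the same weighted cost. Your added remarks on the $\gamma$-versus-$\gamma^2$ convention and the strictness/invertibility conditions are fair observations on points the paper leaves implicit, but they do not change the argument.
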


        \begin{proof}
            The condition for bounding the $\mathcal{H}_2$ norm from $\begin{bmatrix}\dbar\\ \wbar_a\end{bmatrix} \to \z$ is given by $\X\M_{11} + \M^T_{11}\X + \X\M_{12}\begin{bmatrix}\W_d^2 &0 \\ 0 &\W^2_a\end{bmatrix}\M^T_{12}\X \leq 0$, and $\tr{\M_{21}\X^{-1} \M^T_{21}}\leq \gamma$ \cite{feron1992numerical}, which for the problem considered are the following LMIs 
            \begin{align*}
            &\begin{bmatrix}
                \vo{P} + \vo{P}^T & \begin{bmatrix} \Y\B_d & \Y\B_u\\ \vo{0} & \vo{0} \end{bmatrix} \\
                \ast & -\begin{bmatrix}\W_d^{-2} & \vo{0}\\ \vo{0} & \diag{\kappab_a}\end{bmatrix}
            \end{bmatrix} \leq 0, \\[2mm]
            &\begin{bmatrix}
                \Q_1 & \C_z & \vo{0}\\
                \ast & \Y & 0 \\
                \ast & \ast & \I
            \end{bmatrix} \geq 0, \tr{\Q_1} \leq \gamma.
        \end{align*}
        Similar to theorem 1, $\tr{\V^T\V} \leq \gamma_{\x_F}$ is equivalent to $\begin{bmatrix}\Q_2 & \V^T\\ \V & \I \end{bmatrix} \geq 0$ and $\tr{\Q_2} \leq \gamma_{\x_F}$. Maximum actuator degradation is achieved by minimizing $\lambda_a \norm{\kappab_a}_2 + \lambda_{\vo{\omega}_c}\norm{\vo{\omega}_c}_2 + \lambda_{\x_F}\gamma_{\x_F}$, where $\lambda_a$, $\lambda_{\vo{\omega}_c}$, and $\lambda_{\x_F}$ are user-defined weights.
        \end{proof}

        \begin{remark}
            In both theorem 1 and 2, the minimum control rate is given by the elements $\vo{\omega}_c$, the  minimum control magnitude is characterized by $\gamma_{\x_F}$, and the maximum actuator noise is characaterized by the scaling $\diag{1/\sqrt{\kappab}}$, where the square-root and the inverse are defined to be element-wise. 
            
            Therefore, the optimal $\vo{\omega}_c$, $\gamma_{\x_F}$, and $\kappab$ quantities maximum actuator degradation (as defined above) for which a state-feedback controller is able to achieve the user specified-performance in terms of the closed-loop systems $\mathcal{H}_2$ or $\mathcal{H}_\infty$ norms.
        \end{remark}

\section{Examples}
\subsection{F-16 Flight Control Application}
We apply the results to the design of a full-state feedback controller for a model of the longitudinal motion of the F-16 aircraft. The model is based on the aerodynamics from the NASA Technical Report 1538 \cite{nguyen1979simulator}. The nonlinear model is trimmed at $10000$ $ft$ at $900$ $ft/s$ for steady-level flight. The longitudinal states are $\theta$ (pitch angle), $V_t$ (total velocity), $\alpha$ (angle-of-attack), $q$ (pitch rate); with trim value $\bar{\theta} = 5.95^\circ$, $\bar{V}_t = 900\; ft/s$, $\bar{\alpha} = 5.95^\circ$, $\bar{q} = 7.85 \; deg/s$. The control variables are $T$ (thrust), $\delta_e$ (elevator angle), and $\delta_{lef}$ (leading flap); with trim values $\bar{T} = 10461.84\; lb$, $\bar{\delta}_e = -3.82^\circ$, and $\delta_{lef} = 12.42^\circ$. We assume the aircraft experiences disturbance in the angle-of-attack $\alpha$, with $W_d=0.01$.

The open-loop system in \eqn{olp_sys} is given by the following system matrices

\begin{align*}
    &\A := \begin{bmatrix}
        0.0      &0.0        &0.0      &1.0\\
        -32.1699  &-0.0358  &-131.646   &-3.1099\\
          0.0     &-0.0002    &-1.5333  & 0.9281\\
          0.0     & 0.0003    &-4.6719  &-1.9076
    \end{bmatrix},\\
    &\B_u := \begin{bmatrix}
        0.0      &0.0      &0.0\\
        0.0016   &0.0525   &0.1574\\
       -0.0     &-0.0031   &0.0008\\
        0.0     &-0.4503  &-0.0614
    \end{bmatrix},\\
    &\B_d := \begin{bmatrix}
        0\\
        1\\
          0\\
          0
    \end{bmatrix}.
\end{align*}    

The control objective is to keep perturbations in the flight-path angle $\theta-\alpha$ and $V_t$ to be small due to the disturbance in $V_t$, in the $\mathcal{H}_2$ and $\mathcal{H}_\infty$ sense. Therefore,
\begin{alignat*}{1}
    \C_z := \W_z\begin{bmatrix} 1 & 0 & -1& 0\\ 
        0& 1 & 0 & 0
    \end{bmatrix},
    \D_d := \begin{bmatrix}0\\0\end{bmatrix},
\end{alignat*}
where $\W_z:=\diag{\begin{bmatrix} 11.46 & 0.1 \end{bmatrix}}$, scales $\z$ such that the desired closed-loop performance, defined by the $\mathcal{H}_2$ or $\mathcal{H}_\infty$ norm of the transfer function from $\begin{bmatrix}\dbar\\ \wbar_a\end{bmatrix} \to \z$, is bounded by one.

Results from theorems 1 and 2, will quantify the maximum actuator degradation in thrust ($T$), elevator ($\delta_e$), and leading-edge flap ($\delta_{lef}$). 

\subsection{Simulation Results}
The numerical values for the minimum actuator cut-off frequencies and dc gains, and maximum actuator noises are shown in table \ref{tab1}, and are plotted in \fig{act_cutoff}, \fig{act_mag}, and \fig{act_noise}.
\def\arraystretch{1.5}
\renewcommand{\tabcolsep}{0.15cm}
\begin{table}[h!]\centering 
    \begin{tabular}{c|rr|rr|rr}\hline
         & \multicolumn{2}{c|}{$\vo{\omega}^\ast_c$ (rad/s)} & \multicolumn{2}{c|}{$\|\x\to x_{F_i}\|^\ast_\infty$} & \multicolumn{2}{c}{$1/\sqrt{\kappa_i^\ast}$} \\ \hline
         Actuator & $\mathcal{H}_2$ & $\mathcal{H}_\infty$ & $\mathcal{H}_2$ & $\mathcal{H}_\infty$ & $\mathcal{H}_2$ & $\mathcal{H}_\infty$\\ \hline
         $T$ & 0.0033 &2.4381 &2.4945 &0.0207 &0.7157 &0.0402 \\
         $\delta_e$ & 12.9757 &16079.2678 &16.1579 &0.1707 &0.0246 &0.0014\\
         $\delta_{lef}$ & 2.0541 &2033.7134 &17.8508 &0.4800 &0.0468 &0.0027\\ \hline
    \end{tabular}\vspace{2mm}
    \caption{Parameters that quantify maximum actuator degradation for a given closed-loop performance ($\gamma$ = 0.5).} \label{tab1}
\end{table}
as
\begin{figure}[h!]
    \includegraphics*[width=0.4\textwidth]{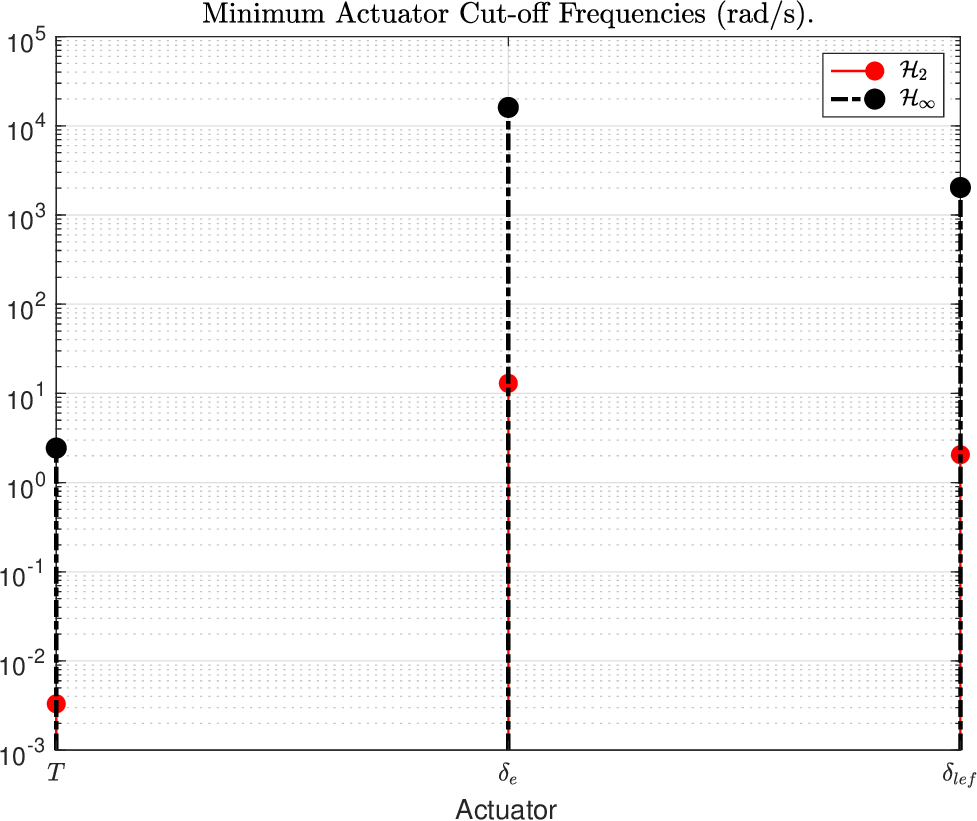}
    \caption{Minimum actuator cutoff frequencies (rad/s).}
    \flab{act_cutoff}
\end{figure}

\begin{figure}[h!]
    \includegraphics*[width=0.4\textwidth]{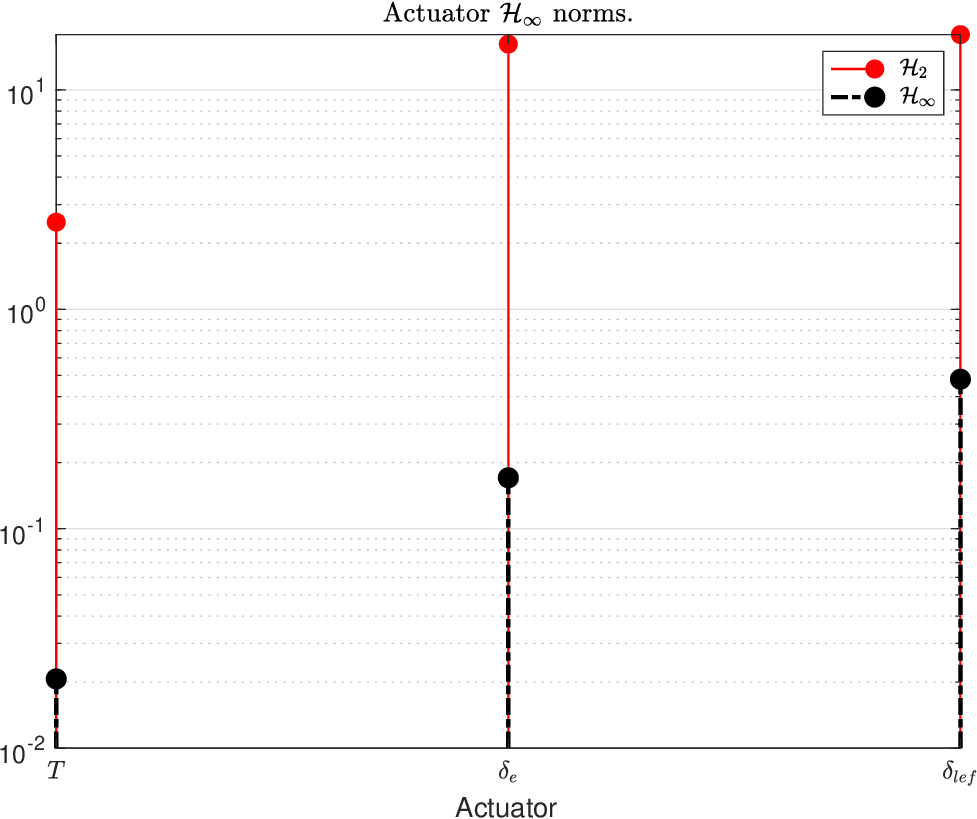}
    \caption{Minimum actuator dcgain.}
    \flab{act_mag}
\end{figure}

\begin{figure}[h!]
    \includegraphics*[width=0.4\textwidth]{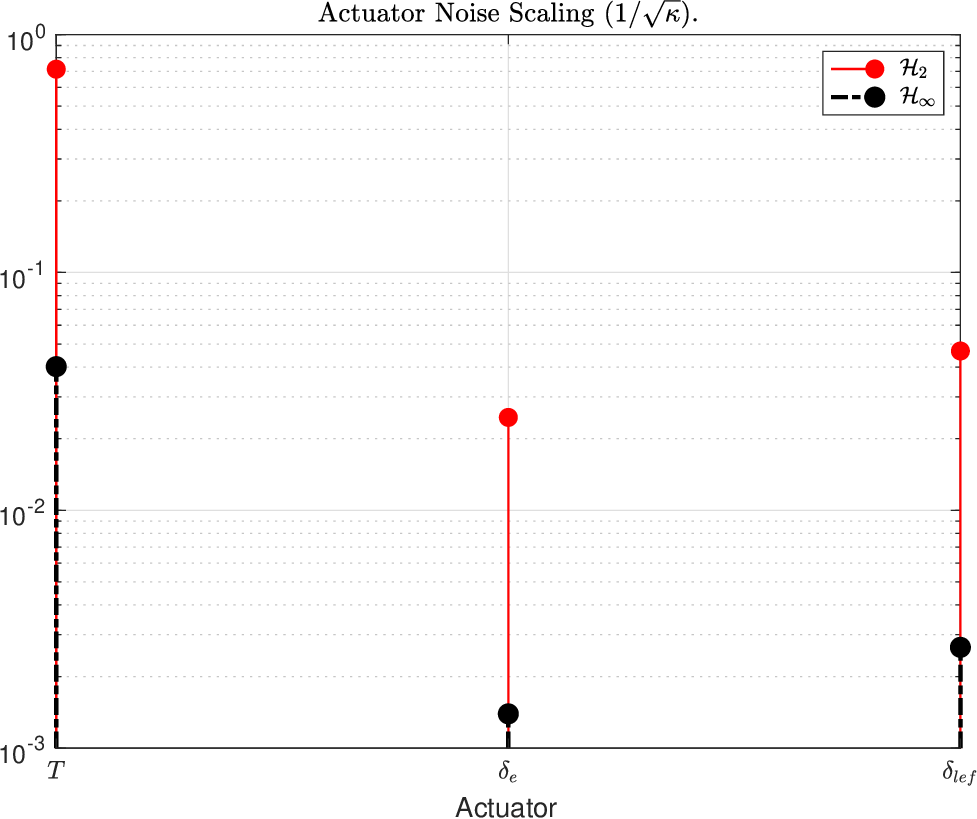}
    \caption{Maximum actuator noise scaling.}
    \flab{act_noise}
\end{figure}

\begin{figure}[h!]
    \includegraphics*[width=0.5\textwidth]{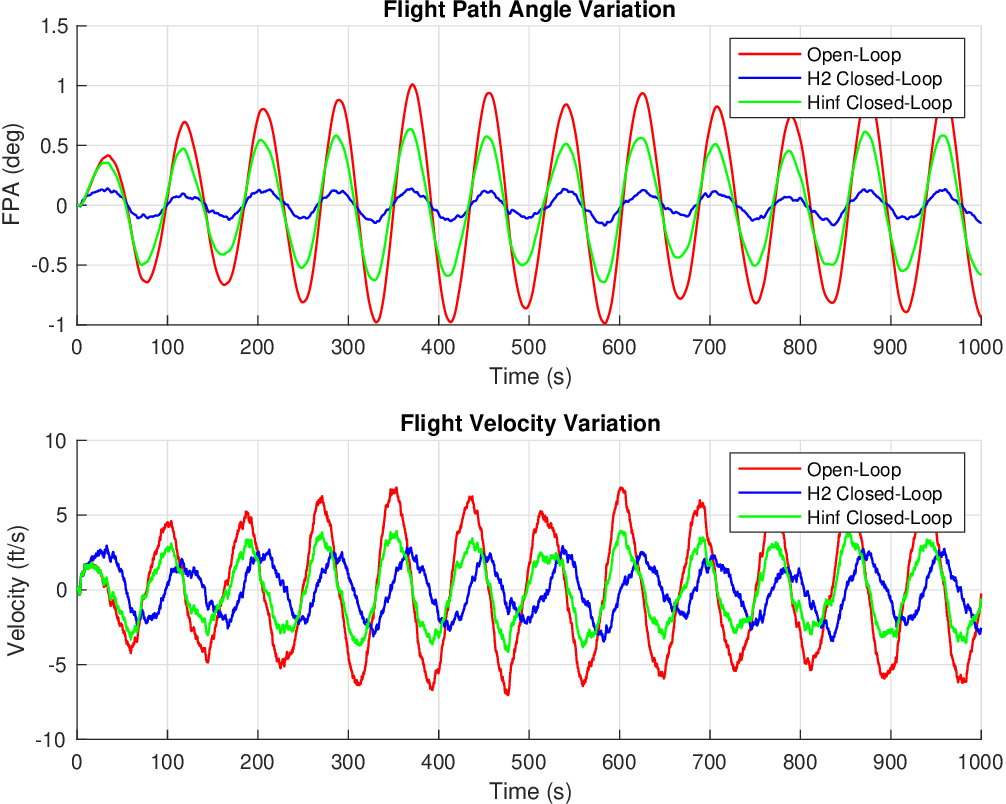}
    \caption{Time Response to Gust, $d(t) = 15\text{wn}(t) + \sin(0.075t)$, where $\text{wn}(t)$ is Gaussian white noise.}
    \flab{trsp}
\end{figure}

Table 1 indicates that the control rate for thrust is sigificantly lower than that for the elevator and leading-edge flap, with the elevator requiring a higher actuation rate compared to the leading-edge flap. Furthermore, the actuation rate resulting from the $\mathcal{H}_2$ formulation is substantially lower than that from the $\mathcal{H}_\infty$ formulation. This discrepancy arises because the exogenous inputs are presumed to be broad-band, and the $\mathcal{H}_\infty$ formulation attenuates disturbances across this broad range. Employing a frequency-weighted model for the exogenous signal is anticipated to yield improved results and will be explored in our future work.

The control magnitude, which is quantified by the $\mathcal{H}_\infty$ norm of the actuator dynamics, is significantly higher for the $\mathcal{H}_2$ closed-loop performance than the $\mathcal{H}_\infty$ closed-loop performance. For the $\mathcal{H}_2$ case, the elevator's and the leading-edge-flap's magnitude are comparable and significantly higher than thrust magnitude. For the $\mathcal{H}_\infty$, we see similar trend, but orders of magnitude smaller. 

The final two columns of Table 1 present the noise scaling for actuator disturbances. It is noted that the noise levels for the elevator and the leading-edge flap are relatively low, indicating that these actuators require high precision to achieve the specified closed-loop performance. Additionally, the noise levels for the $\mathcal{H}_\infty$ case are substantially lower than those for the $\mathcal{H}_2$ case, a disparity attributed to considering broad-band disturbances in the problem.

Based on these observations, we see that the $\mathcal{H}_2$ performance requires lower control rate and higher control magnitude with less precisions, and the $\mathcal{H}_\infty$ performance requires much higher control rate and lower control magnitude with high precisions. 

The time responses to the disturbance for both the open-loop system and the closed-loop system with actuator degradations are shown in \fig{trsp}. It is observed that the $\mathcal{H}_2$ optimal controller exhibits superior disturbance attenuation compared to the $\mathcal{H}_\infty$ controller. This outcome aligns with expectations, as $\norm{z}_\infty$ is bounded in the former case, while $\norm{z}_2$ is bounded in the latter. 

\section{Summary \& Conclusions}
In this paper, we investigate the quantification of maximum actuator degradation in linear dynamical systems for a given closed-loop performance. We introduce a novel unified framework for computing the state-feedback controller gain, which meets a user-specified closed-loop performance criterion while maximizing actuator degradation. This degradation is modeled as a first-order filter with additive noise. We present two new convex optimization formulations that simultaneously determine the controller gain, maximize actuator degradation, and ensure the desired closed-loop performance in both the $\mathcal{H}_2$ and $\mathcal{H}_\infty$ system norms. The results are limited to open-loop stable systems. 

We apply our findings to the design of a full-state feedback controller for the longitudinal motion of the F-16 aircraft. Our observations indicate that the $\mathcal{H}_2$ performance requires a lower control rate and higher control magnitude with less precision, while the $\mathcal{H}_\infty$ performance demands a higher control rate and lower control magnitude with high precision. The $\mathcal{H}_\infty$ outcomes are conservative due to the broad-band disturbance models. We aim to extend this to frequency-weighted input-output formulations in future work.
\bibliographystyle{unsrt}
\bibliography{references}
\end{document}